\newtheorem{theorem}{Theorem}
\newtheorem{proposition}[theorem]{Proposition}
\newtheorem{lemma}[theorem]{Lemma}
\definecolor{gris}{gray}{0.75}
\begin{document}

\title {\bf Graphs with the second and third maximum Wiener index over the
 2-vertex connected graphs}
\author{St\'ephane Bessy$^1$, Fran\c cois
 Dross$^2$, Martin Knor$^3$, Riste \v Skrekovski$^4$} \date{} \maketitle
\vspace{-1cm}
\begin{center}
{\small $^1$Laboratoire d'Informatique, de Robotique et de
 Micro\'{e}lectronique de Montpellier\\
 (LIRMM), CNRS, Universit\'e de  Montpellier, France,\\
 \texttt{stephane.bessy@lirmm.fr}\\[3mm]
 $^2$Universit\'e C\^ote d'Azur, I3S, CNRS, Inria, France,\\
 \texttt{francois.dross@inria.fr}\\[3mm]
 $^3$Slovak Technical University in Bratislava,\\
 Faculty of Civil Engineering, Department of Mathematics, Bratislava, Slovakia,\\
 \texttt{knor@math.sk}\\[3mm]
 $^4$Department of Mathematics, University of Ljubljana, Ljubljana, Slovenia,\\
 \texttt{skrekovski@gmail.com}}
\end{center}

\begin{abstract}
Wiener index, defined as the sum of distances between all unordered pairs
of vertices, is one of the most popular molecular descriptors.
It is well known that among 2-vertex connected graphs on $n\ge 3$ vertices,
the cycle $C_n$ attains the maximum value of Wiener index.
We show that the second maximum graph is obtained from $C_n$ by introducing
a new edge that connects two vertices at distance two on the cycle if
$n\ne 6$.
If $n\ge 11$, the third maximum graph is obtained from a $4$-cycle by
connecting opposite vertices by a path of length $n-3$.
We completely describe also the situation for $n\le 10$.
\end{abstract}

Keywords: Wiener index, 2-vertex connected graphs, gross status, distance, transmission

\section{Introduction}
The sum of distances between all pairs of vertices in a connected graph
was first introduced by Wiener \cite{Wiener} in 1947.
He observed a correlation between boiling points of paraffins and this
invariant, which has later become known as the Wiener index of a graph.
Today, Wiener index is one of the most used descriptors in chemical graph
theory.

Wiener index was used by chemists decades before it attracted attention
of mathematicians.
In fact, it was studied long before the branch of discrete mathematics,
which is now known as Graph Theory, was developed.
Many years after its introduction, the same quantity has been studied
and referred to by mathematicians as the gross status \cite{Harary},
the distance of graphs \cite{EJS} and the transmission \cite{Soltes}.
A great deal of knowledge on the Wiener index is accumulated in several
survey papers, see e.g. \cite{surv1,KS_chapter, ma5,Gut-sur}.

In what follows, we formally define this index.
Let $d_G(u,v)$ denote the distance between vertices $u$ and $v$ in $G$.
The {\em transmission} of a vertex $v$ is the sum of distances
from $v$ to other vertices of $G$, i.e.,
$w_G(v)=\sum_{u\in V(G)} d_G(u,v)$.
Then the Wiener index of $G$ equals
$$
W(G)=\frac{1}{2}\sum_{u\in G} w_G (u) = \sum_{u,v \in V(G)} d_G(u,v).
$$

Due to big importance and popularity, there are many results about graphs
with extremal (either maximum or minimum) values of Wiener index in
particular classes, see the surveys mentioned above.
However, only few papers are devoted to the second, third, etc extremal
graphs, although it is important to understand the ordering of graphs by
Wiener index.
One of the reasons is that results of this type are much more complicated,
often including the extremal graph as a trivial case.
Of course, the situation is known for trees.
In \cite{DG} there are described the first 15 trees with the smallest value of
Wiener index.
Analogously, in \cite{Deng,LLL} there are the first 15 trees with the greatest
value of Wiener index.
Graphs with the second minimum and second maximum value of Wiener index over
the class of unicyclic graphs are found in \cite{FIY}.
In this paper we describe graphs with the second and third maximum value of
Wiener index over the class of 2-vertex connected graphs.

We use the following notation.
As usual, $C_n$ is the cycle on $n$ vertices.
Let $H_{n,p,q}$ be a graph on $n$ vertices comprised of three internally
disjoint paths with the same end-vertices, where the first one has length
$p$, the second one has length $q$, and the last one has length $n-p-q+1$.
Notice that $H_{n,p,q}$ has $n$ vertices.
Also observe that $H_{n,1,2}$ is the cycle on $n$ vertices plus an edge
linking two vertices at distance two on the cycle.
When using the notation $H_{n,p,q}$ we assume that
$1\le p\le q\le n-p-q+1$ and $q>1$.
Our main result is the following theorem.

\begin{theorem}
\label{thm:main}
Let $n\ge 11$ and let $G$ be a 2-vertex connected graph on $n$ vertices
different from $C_n$, $H_{n,1,2}$ and $H_{n,2,2}$.
Then
$$
W(G)<W(H_{n,2,2}) < W(H_{n,1,2}) < W(C_{n}).
$$
\end{theorem}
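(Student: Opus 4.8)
The plan is to establish the chain of inequalities by computing the Wiener indices of the three special graphs exactly and then proving that every other 2-vertex connected graph is beaten by $H_{n,2,2}$. First I would record closed-form expressions for $W(C_n)$, $W(H_{n,1,2})$ and $W(H_{n,2,2})$. Since $C_n$ is highly symmetric, $W(C_n)$ is a standard formula (roughly $n^3/8$). For $H_{n,1,2}$ and $H_{n,2,2}$ one adds one or two chords to the cycle and recomputes distances; because each chord only shortens a bounded number of distance pairs, the differences $W(C_n)-W(H_{n,1,2})$ and $W(C_n)-W(H_{n,2,2})$ should be small polynomials in $n$ that I can compute directly. Verifying the strict inequalities $W(H_{n,2,2})<W(H_{n,1,2})<W(C_n)$ then reduces to comparing these polynomials, which is routine for $n\ge 11$.

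The substantive part is proving $W(G)<W(H_{n,2,2})$ for all other 2-vertex connected $G$. The natural strategy is an extremal/structural argument: since $C_n$ uniquely maximizes $W$ among 2-connected graphs on $n$ vertices (the stated known result), I would show that any $G\notin\{C_n,H_{n,1,2},H_{n,2,2}\}$ either has at least two ``extra'' edges beyond a spanning cycle, or has a single extra edge positioned less favorably than in $H_{n,1,2}$. By 2-connectivity, $G$ contains a spanning structure built from a cycle plus additional chords/ears, so I would use an ear-decomposition or, more concretely, the observation that a 2-connected graph that is not a cycle decomposes into three internally disjoint paths joining two branch vertices (giving the family $H_{n,p,q}$) or contains strictly more edges. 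The aim is to reduce the general case to two manageable subcases: the theta-graphs $H_{n,p,q}$, and all graphs with an additional edge on top of such a theta-structure.

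For the theta-graphs $H_{n,p,q}$ I would treat $W(H_{n,p,q})$ as a function of $(p,q)$ and prove a monotonicity/convexity statement: among all $H_{n,p,q}$ the index is maximized when the two shorter paths are as short as possible, i.e. at $(p,q)=(1,2)$, then $(2,2)$, with every other choice of $(p,q)$ giving a strictly smaller value than $W(H_{n,2,2})$. Concretely I would compute $W(H_{n,p,q})$ in closed form and show that increasing $p$ or $q$ (making the chord cut the cycle more evenly) decreases the index, since a more central chord creates more shortcuts. This is a one-parameter discrete optimization and should be the technical heart of the calculation, but it is mechanical once the formula for $W(H_{n,p,q})$ is in hand.

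The hard part will be handling graphs with edges beyond a single theta-structure, i.e.\ 2-connected graphs with many chords, since there the distance sums are harder to control uniformly. Here the plan is not to compute but to bound: I would argue that adding any further edge to a 2-connected spanning subgraph strictly decreases $W$ (each new edge can only shorten distances and, by 2-connectivity together with $n\ge 11$, must strictly shorten at least one), so it suffices to bound $W(G)$ by the Wiener index of an edge-minimal 2-connected spanning subgraph of $G$. Every edge-minimal 2-connected graph that is not a cycle is exactly a theta-graph $H_{n,p,q}$, so this reduction funnels the general case back into the theta-graph optimization above. Making the ``each extra edge strictly helps'' step quantitatively strong enough to push $W(G)$ strictly below $W(H_{n,2,2})$ — rather than merely below $W(H_{n,1,2})$ — for the borderline graphs is where I expect the main obstacle, and it is presumably where the hypothesis $n\ge 11$ is used; for small $n$ the gaps are too tight, which is consistent with the paper's separate treatment of $n\le 10$.
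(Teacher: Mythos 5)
Your reduction to edge-minimal 2-connected spanning subgraphs has two fatal problems. First, the structural claim it rests on --- that every edge-minimal 2-connected graph other than a cycle is a theta graph $H_{n,p,q}$ --- is false. By the classical characterization (Dirac, Plummer), minimally 2-connected graphs are exactly the 2-connected graphs in which no cycle has a chord, and this class is far richer than cycles and thetas: $K_{2,k}$ with $k\ge 4$, any ``generalized theta'' consisting of four or more internally disjoint paths between two hubs, and subdivisions of $K_4$ are all minimally 2-connected and are not of the form $H_{n,p,q}$. (Note also that $H_{n,1,q}$ is \emph{not} edge-minimal, since deleting its chord leaves $C_n$; so your optimization over $(p,q)$ including $(1,2)$ does not even match the class your reduction produces.) For all these non-Hamiltonian graphs your funnel into the theta-graph computation never happens and your proposal gives no bound at all. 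The paper handles precisely this case with a completely different tool: a majorization argument on the sequences $k(G)$ extracted from distance vectors (Lemmas~\ref{lem:kimpliesW} and~\ref{lem:kimpliesWbis}), which is the real engine of the proof and has no counterpart in your plan.

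Second, even for Hamiltonian $G$ the reduction is quantitatively hopeless, not just delicate. If $G$ is $C_n$ plus a few chords, its only edge-minimal 2-connected spanning subgraph is $C_n$ itself, so your monotonicity step yields only $W(G)<W(C_n)$; but $W(C_n)-W(H_{n,2,2})=\Theta(n^2)$, while the argument ``each added edge strictly shortens at least one distance'' gains only $O(1)$ per edge. With, say, two chords you must account for a deficit of order $n^2$ using two edges, so no per-edge counting of this kind can close the gap. This is exactly the obstacle you flag but leave unresolved, and it is where the theorem lives: the families $C_n$ plus two short chords and $H_{n,1,3}$ plus one edge come close to $H_{n,2,2}$ (for $n=6$ the graph $G_6^1$ even attains equality). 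The paper's way around it is to take $G$ with maximum Wiener index in the class ${\cal C}$ of 2-connected graphs excluding $C_n$, $H_{n,1,2}$, $H_{n,2,2}$ \emph{and} $H_{n,1,3}$, so that minimality is relative to ${\cal C}$ --- deleting an edge may land on one of the excluded graphs rather than destroy 2-connectivity --- and then to dispatch the surviving borderline families by bespoke distance comparisons (Propositions~\ref{prop:H} and~\ref{prop:I}; a computer check is needed only for $n\le 10$, which is outside the statement here). Your closed-form computations of $W(C_n)$, $W(H_{n,1,2})$, $W(H_{n,2,2})$ and the intended monotonicity of $W(H_{n,p,q})$ in $(p,q)$ are reasonable and broadly parallel the paper's Table~2 and Proposition~\ref{prop:H}, but without correct substitutes for the two missing pieces above the proof does not go through.
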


%
%

\section{Proof of the result}

We start with some definitions.
For two vectors $\omega$ and $\omega'$ of the same finite dimension,
we write $\omega \preceq \omega'$ if for every coordinate $i$ we have
$\omega_i\le \omega'_i$.
Moreover we define $\langle \omega \rangle$ as the value $\sum_{i} i \omega_i$.
It is clear that $\omega \preceq \omega'$ implies
$\langle \omega \rangle \le \langle \omega' \rangle$.

Let $G$ be a connected graph on $n$ vertices and let $v$ be a vertex of $G$.
The {\em distance vector} of $v$ is the $(n-1)$-dimensional vector
$\omega_G(v)$ given by $\omega_G(v)_i=|\{x\in G\ : \ d_G(v,x)=i\}|$.
Observe that $\langle \omega_G(v) \rangle=w_G(v)$.

If $n$ is even, the vector ${\bf 2}_n$ has dimension $n/2$ and contains
the value 2 in each coordinate except for the last one which is 1.
If $n$ is odd, ${\bf 2}_n$ has dimension $(n-1)/2$ and each of its
coordinates has value 2.
For example ${\bf 2}_6=(2,2,1)$ and ${\bf 2}_7=(2,2,2)$.

Let $G$ be a 2-vertex connected graph and let $v$ be a vertex of $G$.
Since $G$ has no cut-vertices, every coordinate of $\omega_G(v)$
has value at least 2, except for the last one which can be 1.
In other words, for every vertex $v$ of a 2-vertex connected graph $G$
we have $\langle \omega_G(v) \rangle \le \langle {\bf 2}_n \rangle =
\lfloor \frac{n^2}{4} \rfloor$.
This implies the following classical result.

\begin{theorem}
\label{thm:2vconnected}
For every $n\ge 3$, the cycle $C_n$ is the unique graph which has
the maximum Wiener index over the class of 2-vertex connected graphs on
$n$ vertices.
Moreover, $W(C_n)=\frac{n}{2}\langle {\bf 2}_n\rangle$.
\end{theorem}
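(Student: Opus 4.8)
The plan is to obtain the bound by averaging the per-vertex estimate, to check that it is attained by $C_n$, and then to extract enough rigidity from the equality case to force $G=C_n$. The upper bound is immediate from the material just established: for every vertex $u$ of a 2-vertex connected graph we have $w_G(u)=\langle\omega_G(u)\rangle\le\langle{\bf 2}_n\rangle=\lfloor\frac{n^2}{4}\rfloor$, so summing and using $W(G)=\frac12\sum_u w_G(u)$ gives
$$
W(G)=\frac12\sum_{u\in V(G)}w_G(u)\le\frac{n}{2}\langle{\bf 2}_n\rangle.
$$

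Next I would verify tightness on the cycle. For a vertex $v$ of $C_n$ there are exactly two vertices at each distance $1,2,\dots,\lfloor n/2\rfloor$, except that when $n$ is even the unique antipodal vertex contributes a single vertex at distance $n/2$; hence $\omega_{C_n}(v)={\bf 2}_n$ and $w_{C_n}(v)=\langle{\bf 2}_n\rangle$ for every $v$. Summing over the $n$ vertices yields $W(C_n)=\frac{n}{2}\langle{\bf 2}_n\rangle$, which both proves the displayed formula in the ``Moreover'' clause and shows that the bound above is attained.

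For uniqueness, suppose $W(G)=\frac{n}{2}\langle{\bf 2}_n\rangle$. Since each summand $w_G(u)$ is at most $\langle{\bf 2}_n\rangle$, equality in the sum forces $\langle\omega_G(u)\rangle=\langle{\bf 2}_n\rangle$ for every vertex $u$. The crux, and the step I expect to be the main obstacle since the excerpt supplies only the inequality, is to promote this scalar equality to the vector equality $\omega_G(u)={\bf 2}_n$, i.e.\ to show that ${\bf 2}_n$ is the \emph{unique} maximizer of $\langle\cdot\rangle$ among admissible distance vectors. I would do this by letting $e$ be the eccentricity of $u$ (the last nonzero coordinate), writing $\omega_G(u)_i=2+\epsilon_i$ for $i<e$ and $\omega_G(u)_e=1+\epsilon_e$ with all $\epsilon_i\ge0$. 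The total $\sum_i\omega_G(u)_i=n-1$ then gives $\sum_i\epsilon_i=n-2e$ (in particular $e\le n/2$), and a short computation yields $\langle\omega_G(u)\rangle=e^2+\sum_i i\,\epsilon_i\le e^2+e\sum_i\epsilon_i=e(n-e)\le\lfloor\frac{n^2}{4}\rfloor$. Equality throughout forces $e=\lfloor n/2\rfloor$ and all of the excess concentrated at coordinate $e$, which is exactly the vector ${\bf 2}_n$.

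Finally, knowing $\omega_G(u)={\bf 2}_n$ for every $u$, the first coordinate equals $2$, so every vertex of $G$ has degree exactly $2$; a connected graph in which all vertices have degree $2$ is a single cycle, whence $G=C_n$. This establishes uniqueness and completes the proof.
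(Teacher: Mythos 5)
Your proof is correct and takes essentially the same route as the paper: the per-vertex bound $w_G(u)=\langle\omega_G(u)\rangle\le\langle{\bf 2}_n\rangle$ forced by 2-connectivity, summed over all $n$ vertices, with $C_n$ attaining it coordinate-wise. The paper actually stops at that inequality and cites the theorem as classical, so your extra work --- the computation $\langle\omega_G(u)\rangle=e^2+\sum_i i\,\epsilon_i\le e(n-e)\le\lfloor\frac{n^2}{4}\rfloor$ showing ${\bf 2}_n$ is the unique maximizing distance vector, and the degree-2 argument forcing $G=C_n$ --- is a sound filling-in of the uniqueness claim that the paper leaves implicit.
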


Now we describe the structure of graphs with the second and third
maximum Wiener index over the class of $2$-vertex connected graphs
on $n$ vertices.
First we need some definitions and lemmas.

We denote by $k(v)$ the first coordinate $i$ of $\omega_G(v)$ such that
$\omega_G(v)_i>2$.
If such a coordinate does not exist, we set
$k(v)=\lfloor\frac{n}{2}\rfloor$.
Notice that if $\omega_G(v)\ne{\bf 2}_n$, then
$k(v) < \lfloor\frac{n}{2}\rfloor$.
For a graph $G$ on $n$ vertices we denote by $k(G)=(k_i(G))_{1\le i\le n}$
the sequence formed by the values $k(v)$ of all $v\in V(G)$ given
in non-decreasing order.
For instance, the sequence $k(H_{n,1,2})$ is given by
$k_i(H_{n,1,2})=\lfloor \frac{i+1}{2} \rfloor$ for every $i=1,\dots ,n-1$
and $k_n(H_{n,1,2})=\lfloor \frac{n}{2} \rfloor$.
In other words we have
$k(H_{n,1,2})=(1,1,2,2,3,3,\dots,\lfloor\frac{n}{2} \rfloor)$ with
twice the value $\lfloor \frac{n}{2} \rfloor$ at the end if $n$ is
even and three times if $n$ is odd.
Similarly for $n\ge 5$, the sequence $k(H_{n,2,2})$ is given by
$k_1(H_{n,2,2})=k_2(H_{n,2,2})=1$, $k_3(H_{n,2,2})=k_4(H_{n,2,2})=2$
and $k_i(H_{n,2,2})=\lfloor \frac{i-1}{2} \rfloor$ for every $i=5,\dots,n$.
In other words we have
$k(H_{n,2,2})=(1,1,2,2,2,2,3,3,4,4,\dots,\lfloor\frac{n-1}{2}\rfloor)$
with once the value $\lfloor\frac{n-1}{2} \rfloor$ if $n$ is
odd and twice if it is even.

As previously precised, we write $k(G)\preceq k(G')$ if for every
$i\in \{1,\dots ,n\}$ we have $k_i(G)\le k_i(G')$.
Moreover, if $k_j(G)< k_j(G')$ for some $j\in \{1,\dots ,n\}$, then we
write $k(G)\prec k(G')$ .

The next two lemmas give necessary conditions to bound the Wiener
index of a graph by the ones of $H_{n,2,2}$ and $H_{n,1,2}$.

\begin{lemma}
\label{lem:kimpliesW}
Let $G$ be a 2-vertex connected graph on $n\ge 5$ vertices.
If $k(G)\prec k(H_{n,1,2})$, then $W(G)<W(H_{n,1,2})$.
Similarly if $k(G)\prec k(H_{n,2,2})$, then $W(G)<W(H_{n,2,2})$.
\end{lemma}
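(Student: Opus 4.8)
The heart of the proof is the following uniform upper bound on the transmission of a single vertex in terms of its $k$-value:
\begin{equation}
w_G(v)\ \le\ \langle {\bf 2}_n\rangle-\Big(\big\lfloor\tfrac{n}{2}\big\rfloor-k(v)\Big)\qquad\text{for every }v\in V(G).
\tag{$\star$}
\end{equation}
I would set $\Phi(t)=\langle {\bf 2}_n\rangle-(\lfloor n/2\rfloor-t)$, so that $(\star)$ reads $w_G(v)\le\Phi(k(v))$, and observe that $\Phi$ is strictly increasing. Granting $(\star)$ together with the fact (established below) that the reference graphs attain equality in $(\star)$ at \emph{every} vertex, the lemma follows at once: from $2W(G)=\sum_{v}w_G(v)$ we get $2W(G)\le\sum_{i=1}^n\Phi(k_i(G))$, whereas $2W(H_{n,1,2})=\sum_{i=1}^n\Phi(k_i(H_{n,1,2}))$ and similarly for $H_{n,2,2}$. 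As $\Phi$ is increasing and $k(G)\prec k(H_{n,1,2})$ gives $k_i(G)\le k_i(H_{n,1,2})$ for all $i$ with strict inequality for at least one index $j$, the term for $j$ drops strictly while no term increases, so $\sum_i\Phi(k_i(G))<\sum_i\Phi(k_i(H_{n,1,2}))$; dividing by $2$ yields $W(G)<W(H_{n,1,2})$, and the argument for $H_{n,2,2}$ is identical.

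To prove $(\star)$ I would pass to tail counts. For $j\ge1$ put $N_j(v)=|\{x\in V(G):d_G(v,x)\ge j\}|$, so that $w_G(v)=\sum_{j\ge1}N_j(v)$, and let $M_j=\sum_{i\ge j}({\bf 2}_n)_i=n+1-2j$ for $1\le j\le\lfloor n/2\rfloor$, so that $\langle {\bf 2}_n\rangle=\sum_{j=1}^{\lfloor n/2\rfloor}M_j$. Recall from the discussion preceding Theorem~\ref{thm:2vconnected} that every coordinate of $\omega_G(v)$ is at least $2$ except the last, whence the eccentricity $e=e(v)$ of $v$ satisfies $e\le\lfloor n/2\rfloor$ and $N_j(v)=0$ for $j>\lfloor n/2\rfloor$. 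For $1\le j\le k(v)$ all coordinates $\omega_G(v)_i$ with $i<j$ equal $2$, so $N_j(v)=(n-1)-2(j-1)=M_j$. For $k(v)<j\le\lfloor n/2\rfloor$ either $e<j$, giving $N_j(v)=0\le M_j-1$, or $e\ge j$, in which case every coordinate $\omega_G(v)_i$ with $i<j$ is not the last nonzero one (as $i<j\le e$) and hence is $\ge2$, while in addition $\omega_G(v)_{k(v)}\ge3$ with $k(v)<j$; thus $\sum_{i<j}\omega_G(v)_i\ge2(j-1)+1$ and $N_j(v)\le M_j-1$. Summing over $j$ and counting the $\lfloor n/2\rfloor-k(v)$ indices falling in the second range gives $(\star)$. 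One reads off that equality forces the distance vector of $v$ to have the \emph{extremal shape}: value $2$ on coordinates $1,\dots,k(v)-1$, value exactly $3$ on coordinate $k(v)$, value $2$ thereafter, and eccentricity as large as the total mass permits.

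It then remains to verify that $H_{n,1,2}$ and $H_{n,2,2}$ realise equality in $(\star)$ at every vertex, and this is where I expect the real work to lie: the comparison in the first paragraph collapses only because the bound is exactly tight for the reference graphs, so the full strength of equality is needed, not merely the inequality. Both graphs are cycles carrying one or two short chords, so each distance vector is obtained from the all-$2$'s pattern by the single shortcut a chord provides; for a vertex far from the chords the vector is ${\bf 2}_n$ itself ($k=\lfloor n/2\rfloor$, no deficit), while a vertex incident to or near a chord acquires exactly one extra vertex at some small distance $k$, producing precisely the extremal shape above with that value of $k$. Carrying this out vertex-class by vertex-class, for both parities of $n$ and matching against the explicit sequences $k(H_{n,1,2})$ and $k(H_{n,2,2})$ recorded before this lemma, establishes $w_{H}(v)=\Phi(k_{H}(v))$ for each reference graph $H$ and every vertex $v$, which is exactly what the assembly requires. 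The main obstacle is thus the bookkeeping in this last verification; the bound $(\star)$ and the monotone rearrangement that finishes the lemma are comparatively routine.
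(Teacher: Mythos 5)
Your proposal is correct and takes essentially the same approach as the paper: both arguments rest on the fact that a vertex of $H_{n,1,2}$ or $H_{n,2,2}$ with a given $k$-value attains the maximum possible transmission among all vertices (in any 2-vertex connected $n$-vertex graph) whose $k$-value is at most that value, and then conclude by a coordinate-wise comparison of the sorted $k$-sequences. The only difference is presentational: your explicit bound $(\star)$, with $\Phi$ increasing and equality on the reference graphs, is a quantitative restatement of the paper's maximality claim (its equation (\ref{eq:w})), and your tail-count argument supplies a detailed proof of what the paper asserts in a single sentence.
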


\begin{proof}
Let $H$ be one of the graphs $H_{n,1,2}$ or $H_{n,2,2}$, and let $h\in V(H)$
with $k(h)<\lfloor \frac{n}{2}\rfloor$.
The vector $\omega_{H}(h)$ has 3 at coordinate $k(h)$ and 2 everywhere else,
except possibly for the last coordinate.
Therefore, $\langle \omega_{H}(h) \rangle$ has the largest value
among
\begin{equation}
\label{eq:w}
\big\{\langle \omega_{G'}(u) \rangle \, :\  u\in V(G')\textrm{ and }
\ k(u)\le k(h)\big\},
\end{equation}
where $G'$ is a 2-vertex connected graph of order $n$.
The same conclusion also holds if $k(h)=\lfloor \frac{n}{2}\rfloor$ as
in this case $\omega_{H}(h)={\bf 2}_n$.


Now assume that $k(G)\prec k(H)$.
Relabel the vertices of $G$ (resp.~$H$) so that $V(G)=\{v_1,\dots ,v_n\}$
(resp.~$V(H)=\{v'_1,\dots ,v'_n\}$) and $k(v_i)=k_i(G)$
(resp.~$k(v'_i)=k_i(H)$) for $i=1,\dots,n$.
By assumption, for every $i=1,\dots ,n$ we have $k(v_i)\le k(v'_i)$,
and there exists $i_0 \in \{1,\dots ,n\}$ such that $k(v_{i_0})< k(v'_{i_0})$.
Further, for every $i=1,\dots ,n$ we have
$\langle \omega_{G}(v_i) \rangle \le \langle \omega_{H}(v'_i)\rangle$, by
(\ref{eq:w}).
Since we also have
$\langle\omega_{G}(v_{i_0})\rangle < \langle\omega_{H}(v'_{i_0})\rangle$, we obtain
$2\cdot W(G)<2\cdot W(H)$ and consequently $W(G)<W(H)$.
\end{proof}

In particular, since for $n\ge 6$ we have $k(H_{n,2,2})\prec k(H_{n,1,2})$,
we obtain $W(H_{n,2,2})<W(H_{n,1,2})$.

For a graph $G$ and $x \in V(G)$, we say that $x$ is \emph{bad} if
$\omega_G(x)$ has at least two elements with value greater than 2.
For every bad vertex $x$ in $G$, let $k'(x)$ be the coordinate of the
second element which is at least 3 in $\omega_G(x)$.
For all bad vertices $x$ of $G$ we sum the values
$\lfloor\frac{n-1}{2}\rfloor - k'(x)$, and we denote by $b(G)$ the result.

\begin{lemma}
\label{lem:kimpliesWbis}
Let $H$ be either $H_{n,1,2}$ or $H_{n,2,2}$.
Let $G$ be a 2-vertex-connected graph on $n\ge 4$ vertices such that
$\sum_{x\in V(G)} k(x) < b(G) + \sum_{x \in V(H)}k(x)$.
Then $W(G)<W(H)$.
\end{lemma}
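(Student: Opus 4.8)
The plan is to bound $2W(G)=\sum_{v\in V(G)}\langle\omega_G(v)\rangle$ from above by a quantity depending only on the numbers $k(v)$ and the correction $b(G)$, to evaluate $2W(H)=\sum_{h\in V(H)}\langle\omega_H(h)\rangle$ exactly in terms of the $k$-data, and then to subtract. The first step is to sharpen the observation used in the proof of Lemma~\ref{lem:kimpliesW}. For a vertex $v$ of a $2$-connected graph, the vector $\omega_G(v)$ has entry-sum $n-1$, satisfies $\omega_G(v)_i\ge 2$ for every coordinate but the last (which is $\ge 1$), has $\omega_G(v)_i=2$ for $i<k(v)$, and $\omega_G(v)_{k(v)}\ge 3$. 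Among all integer vectors meeting these constraints, $\langle\cdot\rangle$ is maximized by the canonical vector carrying a $3$ at coordinate $k(v)$, a $2$ elsewhere, and a final $1$ when parity forces it: since the entry-sum is fixed, one makes $\langle\cdot\rangle$ large by saturating every lower bound so as to push the support as far to the right as possible, and a direct computation (handling the two parities of $n$) shows this extremal value equals $\langle{\bf 2}_n\rangle-(\lfloor n/2\rfloor-k(v))$. Hence
\[
\langle\omega_G(v)\rangle\ \le\ \langle{\bf 2}_n\rangle-\big(\lfloor n/2\rfloor-k(v)\big)\qquad\text{for every }v\in V(G),
\]
the boundary case $k(v)=\lfloor n/2\rfloor$ (where $\omega_G(v)={\bf 2}_n$) giving cost $0$, exactly as in Lemma~\ref{lem:kimpliesW}.

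The second step refines this for a bad vertex $x$, where $\omega_G(x)$ additionally has $\omega_G(x)_{k'(x)}\ge 3$ with $k'(x)>k(x)$. Applying the same extremal principle under the two constraints $\omega_{k(x)}\ge 3$ and $\omega_{k'(x)}\ge 3$, the maximizer is the canonical vector carrying a $3$ at both $k(x)$ and $k'(x)$, and the analogous computation shows its value falls below the single-bump bound by precisely $\lfloor\frac{n-1}{2}\rfloor-k'(x)$. Therefore
\[
\langle\omega_G(x)\rangle\ \le\ \langle{\bf 2}_n\rangle-\big(\lfloor n/2\rfloor-k(x)\big)-\big(\lfloor\tfrac{n-1}{2}\rfloor-k'(x)\big)
\]
for every bad $x$; any further coordinates exceeding $2$ only decrease $\langle\cdot\rangle$, so this upper bound is safe.

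Summing the first-step bound over all $v\in V(G)$ and subtracting the extra second-step term over the bad vertices yields
\[
2W(G)\ \le\ n\langle{\bf 2}_n\rangle-n\lfloor n/2\rfloor+\sum_{v\in V(G)}k(v)-b(G).
\]
By the proof of Lemma~\ref{lem:kimpliesW}, every vertex $h$ of $H\in\{H_{n,1,2},H_{n,2,2}\}$ is non-bad and $\omega_H(h)$ is exactly the canonical single-bump vector, so the first-step inequality is an equality with no correction, giving
\[
2W(H)\ =\ n\langle{\bf 2}_n\rangle-n\lfloor n/2\rfloor+\sum_{h\in V(H)}k(h).
\]
Subtracting, $2\big(W(G)-W(H)\big)\le\sum_{v\in V(G)}k(v)-b(G)-\sum_{h\in V(H)}k(h)$, which is strictly negative exactly under the hypothesis $\sum_{x\in V(G)}k(x)<b(G)+\sum_{x\in V(H)}k(x)$, so $W(G)<W(H)$.

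I expect the main obstacle to be the second step: verifying that the canonical double-bump vector is genuinely the maximizer under the two inequality constraints, and that the second bump costs \emph{exactly} $\lfloor\frac{n-1}{2}\rfloor-k'(x)$. The clean exchange argument ``saturate all lower bounds to push the support maximally rightward'' drives both computations, but reconciling the two different floors $\lfloor n/2\rfloor$ and $\lfloor\frac{n-1}{2}\rfloor$ with the parity of $n$ and the special last coordinate is where the care lies.
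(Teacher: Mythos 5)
Your proof is correct and follows essentially the same approach as the paper's: bound each $\langle\omega_G(v)\rangle$ above by the canonical single-bump value determined by $k(v)$, refine the bound by exactly $\lfloor\frac{n-1}{2}\rfloor - k'(x)$ for each bad vertex $x$, observe that every vertex of $H_{n,1,2}$ and $H_{n,2,2}$ attains the single-bump value with equality, and then sum and subtract. The only difference is notational: you express the reference value as $\langle{\bf 2}_n\rangle-(\lfloor n/2\rfloor-k)$ where the paper writes the same quantity as $\langle{\bf 2}_{n-1}\rangle+k$ (and $\langle{\bf 2}_{n-2}\rangle+k+k'$ for the bad-vertex bound).
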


\begin{proof}
Let $i \in \{1,\dots,\lfloor \frac{n}{2} \rfloor\}$ and let $v$
be a vertex in $H$ with $k(v) = i$.
The vertex $v$ has $3$ at coordinate $i$ and $2$ everywhere else,
except for the last coordinate that may be $1$.
Such a vertex $v$ satisfies
$\langle\omega(v)\rangle= \langle\omega({\bf 2}_{n-1})\rangle+i$.
Moreover, as noticed in the proof of Lemma~\ref{lem:kimpliesW},
for every $u$ in $G$ with $k(u)=i$ we have
$\langle\omega(v)\rangle \ge \langle\omega(u)\rangle$, see (\ref{eq:w}).

Similarly, for all $j \in\{i,\dots,\lfloor \frac{n-1}{2} \rfloor\}$,
among the bad vertices $v$ with $k(v) = i$ and $k'(v) = j$, the
highest possible value $\langle\omega(v)\rangle$ is obtained when
$\omega(v)$ has $3$ at coordinates $i$ and $j$, and $2$ at each other
coordinate (except possibly the last one).
Thus, for every bad vertex $v$ with $k(v) = i$
and $k'(v) = j$, we have
$\langle\omega(v)\rangle \le \langle\omega({\bf 2}_{n-2})\rangle+i+j =
\langle\omega({\bf 2}_{n-1})\rangle - \lfloor \frac{n-1}{2} \rfloor +i+j$.
So we have
$$
2\cdot W(G) \le \sum_{x \in V(G)}(\langle\omega({\bf 2}_{n-1})\rangle+ k(x)) - b(G)
$$ 
and 
$$
2\cdot W(H) = \sum_{x \in V(H)}(\langle\omega({\bf 2}_{n-1})\rangle+ k(x)).
$$
Hence, the lemma easily follows.
\end{proof}

In the next two propositions we consider two particular sub-classes of
the 2-vertex connected graphs.
For $n\ge 6$, let ${\cal H}_n$ be the class of graphs comprised of
$H_{n,1,q}$ for $q=3,\dots, \lfloor\frac{n}{2}\rfloor$.
We have the following claim.

\begin{proposition}
\label{prop:H}
Let $n = 9$ or $n \ge 11$, and let $G$ be a graph of ${\cal H}_n$.
Then
$$
W(G)<W(H_{n,2,2})<W(H_{n,1,2}).
$$
For $n\le 8$ and $n=10$, the values of $W(G)$ for
$G\in {\cal H}_n\cup\{H_{n,1,2},H_{n,2,2}\}$ are summarised in the table
below.
\end{proposition}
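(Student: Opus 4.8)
The inequality $W(H_{n,2,2})<W(H_{n,1,2})$ is already recorded in the remark after Lemma~\ref{lem:kimpliesW}, so everything reduces to proving $W(H_{n,1,q})<W(H_{n,2,2})$ for every $q$ with $3\le q\le\lfloor n/2\rfloor$. Lemma~\ref{lem:kimpliesW} cannot deliver this: already for $q=3$ one has $k(H_{n,1,q})\not\preceq k(H_{n,2,2})$, because the vertices near the middle of the long arc produce large $k$-values. The plan is therefore to invoke Lemma~\ref{lem:kimpliesWbis} with $H=H_{n,2,2}$, for which it suffices to show $\sum_{x}k(x)<b(H_{n,1,q})+\sum_{x\in V(H_{n,2,2})}k(x)$.

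To evaluate the left-hand side I would exploit that $H_{n,1,q}$ is $C_n$ together with a chord joining two hubs at cyclic distance $q$; this splits the cycle into a short arc (length $q$) and a long arc (length $n-q$). For any vertex $v$ the vector $\omega(v)$ is obtained by comparing, for each target, the plain cyclic route against the two routes through the chord, so $k(v)$ and the badness of $v$ depend only on the position of $v$. I expect the pattern: both hubs have $k=1$ and are bad; a long-arc vertex at distance $t$ from its nearer hub has $k=t+1$ until it saturates at $\lfloor n/2\rfloor$ near the middle, and it is bad up to a bounded depth; and the interior short-arc vertices that the chord does not shortcut attain $k=\lfloor n/2\rfloor$ (for $q=3$ these are all of them). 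Summing gives closed, piecewise-quadratic expressions for $\sum_{x}k(x)$ and $b(H_{n,1,q})$, the parity of $n$ entering through the final coordinate of the distance vectors.

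It then remains to verify the inequality, and here $q=3$ is the decisive case: it is the member of ${\cal H}_n$ with the largest Wiener index (the chord cuts the shortest detour), and correspondingly $\sum_{x}k(x)-b(H_{n,1,q})$ — the only combination Lemma~\ref{lem:kimpliesWbis} uses — is largest at $q=3$, so it is enough to check $q=3$. Writing $n=2m+1$ or $n=2m$, the closed forms turn the required inequality into $m^2-5m+6>0$ in the odd case and $m^2-7m+8>0$ in the even case, which hold precisely for $n=9$ or $n\ge 11$. For $n\le 8$ and $n=10$ the estimate genuinely fails — indeed $W(H_{10,1,3})=113>112=W(H_{10,2,2})$ — so there I would compute the finitely many values $W(H_{n,1,q})$, $W(H_{n,1,2})$, $W(H_{n,2,2})$ directly (e.g.\ by splitting each pair of vertices according to the two arcs) and list them in the table. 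The crux of the argument is the bookkeeping in the second paragraph: determining uniformly in $q$ and in the parity of $n$ which vertices are bad together with their second index $k'$, and confirming that $q=3$ is extremal for the inequality.
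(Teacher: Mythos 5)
Your plan starts out exactly as the paper does: Lemma~\ref{lem:kimpliesWbis} with $H=H_{n,2,2}$ is the right tool (and you correctly observe that Lemma~\ref{lem:kimpliesW} alone cannot work), and your explicit evaluation at $q=3$ appears to be correct --- for instance at $n=12$ and $n=14$ the actual slack $b(H_{n,1,3})+\sum_{x\in V(H_{n,2,2})}k(x)-\sum_{x\in V(H_{n,1,3})}k(x)$ equals $2$ and $8$, matching your $m^2-7m+8$. The genuine gap is the step that reduces the whole family ${\cal H}_n$ to the single member $q=3$. You justify it by saying that $H_{n,1,3}$ has the largest Wiener index in ${\cal H}_n$ and that ``correspondingly'' $\sum_x k(x)-b(H_{n,1,q})$ is largest at $q=3$. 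That inference is a non sequitur: $\sum_x k(x)-b$ enters Lemma~\ref{lem:kimpliesWbis} only as an upper-bound proxy for $2W$ (up to the additive constant $n\langle\omega({\bf 2}_{n-1})\rangle$), and the looseness of that bound varies with $q$, so maximality of $W$ at $q=3$ (which you also do not prove) says nothing about where the proxy is maximized. The extremality claim does appear to be true --- e.g.\ for $n=12$ the values of $\sum_x k(x)-b(H_{12,1,q})$ for $q=3,4,5,6$ are $32,24,24,20$, to be compared with the threshold $\sum_{x\in V(H_{12,2,2})}k(x)=34$ --- but establishing it requires computing, or at least bounding, $\sum_x k(x)$ and $b$ for \emph{every} $q$ and every parity of $n$, $q$ and $n-q$, which is precisely the uniform bookkeeping the reduction was supposed to spare you; you yourself flag it as the unproved ``crux''. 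A further sign that this bookkeeping is not routine: your remark that long-arc vertices are bad only ``up to a bounded depth'' is false --- for $q=3$ the vertex $y_i$ is bad with $k'(y_i)=i+2$ for all $i$ up to roughly $n/3$, a depth growing linearly in $n$ (indeed your own polynomial margins are only reproduced if this growing family of bad vertices is counted).

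The paper closes the argument without any extremality statement, by bounding both sides uniformly in $q$: counting the multiset of $k$-values gives, for every $q\ge 3$, $\sum_{x\in V(H_{n,1,q})}k(x)-\sum_{x\in V(H_{n,2,2})}k(x)\le n-5$ for odd $n$ and $\le n-2$ for even $n$; and exhibiting four bad vertices valid for all $q$ (the two hubs with $k'=2$, and $y_1,y_{n-q-1}$ with $k'=3$, using $n\ge 9$) plus a fifth one when $n\ge 11$ gives $b(H_{n,1,q})\ge 4\lfloor\frac{n-1}{2}\rfloor-10$, strictly for $n\ge 11$. A short case analysis ($n=9$, $n=11$, $n\ge 12$) then verifies the hypothesis of Lemma~\ref{lem:kimpliesWbis} for all $q$ simultaneously. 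So either adopt such crude but $q$-uniform bounds, or, if you want to keep your sharper route through exact formulas at $q=3$, you must replace ``correspondingly'' by an actual comparison of the piecewise-quadratic expressions of $\sum_x k(x)-b$ over all $q$ --- more work, not less, than the paper's estimates.
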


\scriptsize
\begin{center}
\begin{tabular}{r||c|c|c|c}
$n$ & 4 & 5 & 6 & 7 \\
\hline
& & & & \\
${\cal H}_n$ & $H_{4,1,2}$ &  $H_{5,1,2}$~~$H_{5,2,2}$ &
$H_{6,1,2}$~~$H_{6,1,3}$~~$H_{6,2,2}$ & $H_{7,1,2}$~~$H_{7,1,3}$~~$H_{7,2,2}$ \\
& & & & \\
$W$ & 7 & 14~~~~~~14 & 24~~~~~~25~~~~~~23 & 39~~~~~~38~~~~~~38 \\
\end{tabular}\phantom{v}\\
\vskip 2mm
\begin{tabular}{r||c|c}
$n$ & 8 & 10 \\
\hline
& & \\
${\cal H}_n$ & $H_{8,1,2}$~~$H_{8,1,3}$~~$H_{8,1,4}$~~$H_{8,2,2}$
& $H_{10,1,2}$~~$H_{10,1,3}$~~$H_{10,1,4}$~~$H_{10,1,5}$~~$H_{10,2,2}$\\
& & \\
$W$ & 58~~~~~~58~~~~~~55~~~~~~56 & 115~~~~~~113~~~~~~107~~~~~~109~~~~~~112 \\
\end{tabular}
\end{center}
\normalsize
\begin{center}
Table~1: Values of $W(H_{n,p,q})$ for $n\le 8$ and $n=10$.
\end{center}

\begin{proof}
Assume that $n \ge 9$.
As noticed below Lemma~\ref{lem:kimpliesW}, we have $W(H_{n,2,2})<W(H_{n,1,2})$.
Let us now focus on $H_{n,1,q}$ for $3\le q\le n-q$.
Let $u$ and $v$ be the two vertices of degree 3 in $H_{n,1,q}$, and let
$x_1, \dots, x_{q-1}$ be the internal vertices of the path of length $q$
from $u$ to $v$.
Further, let $y_1,\dots ,y_{n-q-1}$ be the internal vertices of the path of length
$n-q$ from $u$ to $v$.
We have $k(u)=k(v)=1$.
For $i=1,\dots,\lfloor \frac{q}{2}\rfloor -1$ we have $k(x_i)=k(x_{q-i})=i+1$
and for $i=1,\dots,\lfloor \frac{n-q}{2} \rfloor-1$ we have
$k(y_i)=k(y_{n-q-i})=i+1$.
Finally, if $q$ is odd then we have
$k(x_{(q-1)/2}) = k(x_{(q+1)/2}) = \lfloor \frac{n}{2} \rfloor$ and if $q$ is
even then we have $k(x_{q/2}) = \lfloor \frac{n}{2} \rfloor$.
Similarly if $n-q$ is odd then we have
$k(y_{(n-q-1)/2}) = k(y_{(n-q+1)/2}) = \lfloor \frac{n}{2} \rfloor$ and if $n-q$
is even then we have $k(y_{(n-q)/2}) = \lfloor \frac{n}{2} \rfloor$.
Therefore the numbers in $k(H_{n,1,q})$ are at least twice each integer from $1$
to $\lfloor \frac{n-q}{2} \rfloor$, plus at most three times $\frac{n-1}{2}$ if
$n$ is odd and at most four times $\frac{n}{2}$ if $n$ is even.
Therefore we have $\sum_{x \in V(H_{n,1,q})} k(x) \le
3\cdot\frac{n-1}{2} + 2\cdot\sum_{i \in \{1,..., \frac{n-3}{2}\}}i$
if $n$ is odd and $\sum_{x \in V(H_{n,1,q})} k(x) \le
4\cdot\frac{n}{2} + 2\cdot\sum_{i \in \{1,...,\frac{n-4}{2}\}}i$ if $n$ is
even.

Recall that $k(H_{n,2,2}) = (1,1,2,2,2,2,3,3,4,4,\dots,\lfloor
\frac{n-1}{2}\rfloor)$, with once the value $\lfloor \frac{n-1}{2}\rfloor$
if $n$ is odd and twice if $n$ is even.
Therefore if $n$ is odd, then
$\sum_{x \in V(H_{n,1,q})} k(x) - \sum_{x \in V(H_{n,2,2})}k(x) \le
2(\frac{n-1}{2} - 2) = n-5$; and if $n$ is even, then $\sum_{x \in
V(H_{n,1,q})} k(x) - \sum_{x\in V(H_{n,2,2})}k(x) \le 2(\frac{n}{2}-2)+2
=n-2$.

Now notice that since $q \ge 3$ and $n - q \ge 4$, $u$ and $v$ are bad
vertices with $k' = 2$.
For instance, $u$ has three distinct vertices at distance two,
which are $x_2$, $y_2$ and $y_{n-q-1}$.
Moreover, since $n \ge 9$, $y_1$ and $y_{n-q-1}$ are bad vertices with
$k' = 3$.
For instance, if $q = 3$ then $n-q-1\ge 5$ and $x_2$, $y_4$, and $y_{n-q-1}$ are
distinct vertices at distance $3$ from $y_1$, and if $q \ge 4$, then
$x_2$, $x_{q-1}$, and $y_4$ are distinct vertices at distance $3$ from
$y_1$.
Moreover, if $n\ge 11$ we can find another bad vertex.
Indeed, if $q\ge 5$, then $n-q\ge q\ge 5$ and $x_4$, $y_2$
and $y_{n-q-1}$ are distinct vertices at distance 3 from $x_1$,
which is bad then.
If $q=4$, then $n-q\ge 7$ and $x_2$, $x_{q-1}$ and $y_6$ are at distance
4 from $y_2$, which is bad.
And if $q=3$, then $n-q\ge 8$ and $x_2$, $y_6$ and $y_{n-q-1}$ are
at distance 4 from $y_2$ which is bad.
Therefore $b(H_{n,1,q}) \ge 4 \cdot \lfloor\frac{n-1}{2}\rfloor
- (2\cdot 2 + 2\cdot 3)$, with a strict inequality if $n \ge 11$.
If $n = 9$ then $b(H_{n,1,q}) \ge 6$ and
$\sum_{x \in V(H_{n,1,q})} k(x) - \sum_{x \in V(H_{n,2,2})}k(x) \le
4$. If $n =11$ then $b(H_{n,1,q}) \ge 10$ and $\sum_{x \in
V(H_{n,1,q})} k(x) - \sum_{x \in V(H_{n,2,2})}k(x) \le 6$.
If $n \ge 12$, then $b(H_{n,1,q}) > 2n - 14 \ge n - 2$ and $\sum_{x \in
V(H_{n,1,q})} k(x) - \sum_{x \in V(H_{n,2,2})}k(x) \le n - 2$.
In all cases we have $W(H_{n,1,q})<W(H_{n,2,2})$, by Lemma~\ref{lem:kimpliesWbis}.
\end{proof}

For $n\ge 4$, let ${\cal I}_n$ be the class of graphs built from $C_n$
by adding two distinct edges, each linking two vertices at distance
precisely 2 along $C_n$.
That is, a graph $G$ belongs to ${\cal I}_n$ if $V(G)=\{x_1,\dots ,x_n\}$
and $E(G)=\{x_ix_{i+1}\ :\ i=1,\dots ,n-1\}\cup\{x_nx_1,x_1x_3,x_ix_{i+2}\}$,
where $1<i\le n-2$.
Further, by $G_6^1$ we denote a graph from ${\cal I}_6$ when $i=4$.
So $G_6^1$ consists of two disjoint triangles connected by two independent
edges.
We have the following claim.

\begin{proposition}
\label{prop:I} 
Let $n\ge 5$.
Every graph $G$ of ${\cal I}_n$ satisfies the inequality
$W(G)<W(H_{n,2,2})$, with the unique exception of $G_6^1$ for which
$W(G_6^1)=W(H_{6,2,2})$.
\end{proposition}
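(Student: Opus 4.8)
The plan is to argue exactly as in the proof of Proposition~\ref{prop:H}: compute the sequence $k(G)$ (and, where it helps, the quantity $b(G)$) and feed the result into Lemma~\ref{lem:kimpliesWbis}. I would write $G$ as $C_n$ with the two chords $x_1x_3$ and $x_ix_{i+2}$, and split the discussion according to $i$: the two triangles $x_1x_2x_3$ and $x_ix_{i+1}x_{i+2}$ are vertex-disjoint when $i\ge 4$, share the vertex $x_3$ when $i=3$, and share an edge when $i=2$. In every case the vertices of degree at least $3$ are precisely the chord endpoints: four of them ($x_1,x_3,x_i,x_{i+2}$) in general, and three ($x_1,x_3,x_5$) when $i=3$. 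Each has at least three neighbours, so the first coordinate of its distance vector is at least $3$ and hence $k=1$. This is the source of the saving over $H_{n,2,2}$, whose sequence $k(H_{n,2,2})=(1,1,2,2,2,2,\dots)$ has only two entries equal to $1$. Note that the termwise comparison $k(G)\preceq k(H_{n,2,2})$ typically fails — the apices of $G$ can carry larger $k$-values than the matching entries of $H_{n,2,2}$ — so the sum-based Lemma~\ref{lem:kimpliesWbis}, and not Lemma~\ref{lem:kimpliesW}, is the right tool.

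Next I would compute $k$ for the remaining vertices: the two apices $x_2,x_{i+1}$ of the triangles, together with the internal vertices of the two arcs joining the triangles (of lengths $i-3$ and $n-i-1$). A breadth-first search from an apex spreads two new vertices per layer, just as on a cycle, so an apex carries a large value of $k$, up to $\lfloor n/2\rfloor$. By contrast, a breadth-first search from an arc-internal vertex soon meets the opposite triangle, whose chord bunches three vertices into one layer; such a vertex therefore has a small value of $k$ (typically $2$). Assembling these contributions yields $\sum_{x\in V(G)}k(x)$, and the target of this step is the uniform inequality $\sum_{x\in V(G)}k(x)\le\sum_{x\in V(H_{n,2,2})}k(x)$, with equality only when neither arc has an internal vertex and both apices remain good.

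With this in hand the conclusion follows from Lemma~\ref{lem:kimpliesWbis}: since $b(G)\ge 0$, whenever the inequality of the previous step is strict we obtain $\sum_{x\in V(G)}k(x)<b(G)+\sum_{x\in V(H_{n,2,2})}k(x)$, hence $W(G)<W(H_{n,2,2})$. The inequality can fail to be strict only if both connecting arcs have length at most $1$ and both apices are good; tracing back the arc-length conditions $i-3\le 1$ and $n-i-1\le 1$ forces $n=6$ and $i=4$, that is $G=G_6^1$. For this graph $k(G_6^1)=(1,1,1,1,3,3)$ and $b(G_6^1)=0$, so $\sum_x k(x)=10=\sum_{x\in V(H_{6,2,2})}k(x)$ and the bound of Lemma~\ref{lem:kimpliesWbis} is attained; a direct computation of the six transmissions ($7,9,7,7,9,7$) confirms $W(G_6^1)=23=W(H_{6,2,2})$.

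The main obstacle is the uniform $k$-count of the second step: the values of $k$ at the apices and at the arc-internal vertices depend on the two arc lengths and on several parities, and the degenerate positions $i=2$ and $i=3$ (sharing an edge, respectively a vertex) must be handled separately. The genuinely delicate point is proving that equality $\sum_{x\in V(G)}k(x)=\sum_{x\in V(H_{n,2,2})}k(x)$ occurs only for $G_6^1$ — that is, that as soon as one arc carries an internal vertex, or one apex ceases to be good, the saving from the four degree-$3$ vertices strictly outweighs the cost. I would settle the finitely many small or degenerate configurations ($n\le 7$, or an arc of length at most $1$) by writing out the distance vectors explicitly, and treat the bulk $n\ge 8$ by the general arc analysis above.
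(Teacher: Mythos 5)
Your route is genuinely different from the paper's, which disposes of $i=2$ by noting that $G$ is then a strict supergraph of $H_{n,2,2}$, and for $i\ne 2$ compares $G$ with $H_{n,1,3}$ directly: both graphs are a cycle of length $n-2$ plus two extra vertices, deleting these two vertices term by term gives $W(H_{n,1,3})-W(G)\ge\lfloor\frac{n-2}{2}\rfloor$, and Proposition~\ref{prop:H} together with Table~1 then finishes the proof. Your sum-of-$k$ approach via Lemma~\ref{lem:kimpliesWbis} could in principle be made to work, but as written it has a genuine gap: the central claim of your second step --- that $\sum_{x\in V(G)}k(x)\le\sum_{x\in V(H_{n,2,2})}k(x)$ with equality only when neither arc has an internal vertex, hence only for $G_6^1$ --- is false.

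Take $n=8$ and $i=5$, i.e.\ $G=C_8$ plus the chords $x_1x_3$ and $x_5x_7$: two disjoint triangles joined by two arcs of length $2$. The four chord endpoints have $k=1$; the two apices satisfy $\omega(x_2)=\omega(x_6)=(2,2,2,1)$, hence $k(x_2)=k(x_6)=4=\lfloor n/2\rfloor$; and the two arc vertices satisfy $\omega(x_4)=\omega(x_8)=(2,4,1)$, hence $k(x_4)=k(x_8)=2$. Thus $\sum_{x\in V(G)}k(x)=4+8+4=16=\sum_{x\in V(H_{8,2,2})}k(x)$, even though both arcs carry an internal vertex. Worse, no vertex of $G$ is bad (no distance vector has two coordinates of value at least $3$), so $b(G)=0$ and the strict hypothesis of Lemma~\ref{lem:kimpliesWbis} fails; Lemma~\ref{lem:kimpliesW} is also unavailable, since the sorted sequence $(1,1,1,1,2,2,4,4)$ is not $\preceq(1,1,2,2,2,2,3,3)$. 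The proposition does hold for this graph --- a direct computation gives $W(G)=55<56=W(H_{8,2,2})$, the saving of $2$ coming exactly from the entry $4$ in $\omega(x_4)$ and $\omega(x_8)$, which the bound underlying Lemma~\ref{lem:kimpliesWbis} cannot detect --- so to rescue your argument you must either add this graph as a second exceptional case settled by direct computation (your equality analysis explicitly, and wrongly, excludes it), or sharpen the lemma so that coordinates of value at least $4$ also contribute a deficit term like $b(G)$. A lesser issue: the ``degenerate'' families $i\in\{2,3\}$ and ``one arc of length at most $1$'' are infinite families (the other arc grows with $n$), so they cannot be dispatched by writing out finitely many distance vectors; for $i=2$ the clean argument is the paper's supergraph observation.
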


\begin{proof}
Let $G\in{\cal I}_n$.
If $i=2$ then $G$ is a strict supergraph of $H_{n,2,2}$, which implies
$W(G)<W(H_{n,2,2})$.

So assume that $i\ne 2$.
We compare $W(G)$ with $W(H_{n,1,3})$.
Notice that both $G$ and $H_{n,1,3}$ are cycles of length $n-2$ with two
additional vertices.
Let us call these two additional vertices by $u_1$ and $u_2$ in $G$,
and by $u'_1$ and $u'_2$ in $H_{n,1,3}$.
Then
\begin{align}
W(H_{n,1,3}) - W(G)=&\ d(u'_1,u'_2)
+ w_{H_{n,1,3} - u'_2}(u'_1) + w_{H_{n,1,3} - u'_1}(u'_2) \nonumber\\
-&\ d(u_1,u_2) - w_{G - u_2}(u_1) - w_{G - u_1}(u_2). \nonumber
\end{align}
Since $G$ is $2$-connected, we have
$d(u_1,u_2) \le \lfloor \frac{n}{2} \rfloor$, and in $H_{n,1,3}$,
$u'_1$ and $u'_2$ are at distance $1$.
Further, $G - u_2$ is a cycle of length $n-2$ with an additional
vertex $u_1$ adjacent to two neighbours in the cycle, and
$H_{n,1,3} - u'_2$ is a cycle of length $n-2$ with an additional vertex
$u'_1$ adjacent to one vertex in the cycle.
Therefore $w_{H_{n,1,3} - u'_2}(u'_1) - w_{G - u_2}(u_1)
= \lfloor \frac{n-2}{2} \rfloor$, and by symmetry, $w_{H_{n,1,3} -
u'_1}(u'_2) - w_{G - u_1}(u_2) = \lfloor \frac{n-2}{2} \rfloor$.
Therefore, $W(H_{n,1,3})-W(G) \ge \lfloor\frac{n-2}{2}\rfloor$.
Now combining this inequality with Proposition~{\ref{prop:H}}
we obtain that $W(H_{n,2,2})-W(G)>0$ for all $n\ge 5$, with a unique
exception when $n=6$ and $W(H_{6,2,2})-W(G)=0$.
In this case we must have $d(u_1,u_2)=\frac n2$, and consequently $G=G_6^1$.
\end{proof}

Consider $H_{n,1,3}$ with vertices $\{x_1,\dots,x_n\}$ and edges
$\{x_ix_{i+1}\ : \ i=1,\dots,n{-}1\}\cup\{x_nx_1,x_1x_4\}$.
Let $G_6^2$ be obtained from $H_{6,1,3}$ by adding the edge $x_1x_3$,
and let $G_6^3$ be obtained from $H_{n,1,3}$ by adding the edge $x_2x_5$.
Denote ${\cal G}=\{G_6^1,G_6^2,G_6^3\}$.
Moreover, let $G_8^1$ be obtained from $H_{8,1,3}$ by adding the edge
$x_5x_8$.
Observe that $W(G)=W(H_{6,2,2})$ when $G\in{\cal G}$, and
$W(G_8^1)=W(H_{8,2,2})$.
The following theorem implies and precises Theorem~\ref{thm:main}.

\begin{theorem}
\label{thm:2vconnectedbis}
For $n =4$, there are three $2$-vertex connected graphs and they satisfy
$W(K_4)<W(H_{4,1,2})<W(C_4)$.
For every $n\ge 5$, let $G$ be a 2-vertex connected graph on $n$ vertices
different from $C_n$, $H_{n,1,2}$, $H_{n,2,2}$ and $H_{n,1,3}$.
Moreover, assume $G\notin{\cal G}$ if $n=6$ and $G\ne G_8^1$ if $n=8$.
We have :
\begin{itemize} 
\item $W(G) < W(H_{n,2,2})=W(H_{n,1,2})$ for $n = 5$,
\item $W(G) < W(H_{n,2,2}) < W(H_{n,1,2}) < W(H_{n,1,3})$ for $n = 6$,
\item $W(G) < W(H_{n,2,2}) = W(H_{n,1,3}) < W(H_{n,1,2})$ for $n = 7$,
\item $W(G) < W(H_{n,2,2}) < W(H_{n,1,3}) = W(H_{n,1,2})$ for $n = 8$,
\item $W(G) < W(H_{n,2,2}) < W(H_{n,1,3}) < W(H_{n,1,2})$ for $n = 10$, and
\item $W(G)<W(H_{n,2,2}) < W(H_{n,1,2})$ and $W(H_{n,1,3}) <
  W(H_{n,2,2})$ for $n = 9$ and $n \ge 11$.
\end{itemize}
\end{theorem}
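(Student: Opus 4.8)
The plan is to split the statement into two essentially independent tasks: first, to order the finitely many ``record holders'' $C_n$, $H_{n,1,2}$, $H_{n,1,3}$ and $H_{n,2,2}$ against one another for each fixed $n$; and second, to establish the bulk inequality $W(G)<W(H_{n,2,2})$ for every other $2$-vertex connected graph $G$. For the first task I would invoke Theorem~\ref{thm:2vconnected} to place $C_n$ strictly on top, the remark following Lemma~\ref{lem:kimpliesW} to obtain $W(H_{n,2,2})<W(H_{n,1,2})$ for $n\ge 6$, and Proposition~\ref{prop:H} to locate $H_{n,1,3}$ relative to $H_{n,2,2}$ (strictly below it for $n=9$ and $n\ge 11$). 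The delicate small and even values $n\le 10$ are read off directly from Table~1, which also pins down the listed equalities $W(H_{n,2,2})=W(H_{n,1,2})$ for $n=5$, $W(H_{n,2,2})=W(H_{n,1,3})$ for $n=7$, and $W(H_{n,1,3})=W(H_{n,1,2})$ for $n=8$; the case $n=4$ is settled by simply listing the three graphs $K_4$, $H_{4,1,2}$, $C_4$ and computing their indices.

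The second and harder task I would attack through the two $k$-sequence criteria. If $k(G)\prec k(H_{n,2,2})$, then Lemma~\ref{lem:kimpliesW} immediately yields $W(G)<W(H_{n,2,2})$, so all the difficulty is concentrated on graphs that are at least as cycle-like as $H_{n,2,2}$ in the $\preceq$-order. For such $G$ the second tool, Lemma~\ref{lem:kimpliesWbis}, becomes the workhorse: it suffices to show $\sum_{x\in V(G)}k(x)<b(G)+\sum_{x\in V(H_{n,2,2})}k(x)$, that is, that whatever excess of $k$-values $G$ enjoys over $H_{n,2,2}$ is more than compensated by its bad vertices. This is exactly the mechanism already exploited in the proof of Proposition~\ref{prop:H}, and the same bookkeeping should apply in the general setting once the structure is pinned down.

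The structural core is then a dichotomy: if $k(G)\not\prec k(H_{n,2,2})$, then $G$ is forced to be very close to a cycle, so that it either lies in one of the two explicit families $\mathcal{H}_n$ or $\mathcal{I}_n$, or else has enough bad vertices for Lemma~\ref{lem:kimpliesWbis}. Concretely, I would organize the analysis by the number $t$ of vertices of degree at least $3$, which are precisely the vertices with $k(v)=1$. The case $t=0$ is $C_n$, and $t=1$ is excluded by parity of the degree sum. The case $t=2$ produces the graphs in which the two branch vertices are joined by internally disjoint paths: for degree exactly $3$ these are the theta graphs $H_{n,p,q}$, with $p=1$ handled by Proposition~\ref{prop:H} and the remaining pairs $p,q\ge 2$ other than $(2,2)$ dominated in $k$-sequence by $H_{n,2,2}$, while more paths only shorten them and lower the index further. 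The case $t\ge 3$ contains in particular the two-chord graphs of $\mathcal{I}_n$ (for which $t=4$), dispatched by Proposition~\ref{prop:I}, the rest being handled by bad-vertex counting; the exceptional graphs $G_6^1,G_6^2,G_6^3$ and $G_8^1$ emerge precisely as the equality cases of this count.

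The main obstacle will be this structural dichotomy itself: proving that a large value $k(v)$ genuinely forces long cycle-like neighbourhoods around $v$, and hence that only very few chords can be present without destroying the layer structure that keeps $\sum_x k(x)$ high enough to threaten $H_{n,2,2}$. The bad-vertex bookkeeping in Lemma~\ref{lem:kimpliesWbis} must be made sharp, since the exceptional equalities show the underlying inequality is tight for small $n$; consequently the parity of $n$ (through the $\lfloor\cdot\rfloor$ terms) and the borderline values $n=9,10,11$ will each require separate and careful verification rather than a single uniform estimate.
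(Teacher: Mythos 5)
Your ordering of the four named graphs and your identification of Lemmas~\ref{lem:kimpliesW} and~\ref{lem:kimpliesWbis} as the basic tools match the paper, but the structural core of your plan has a genuine gap. The paper does not analyse all $2$-connected graphs by their number $t$ of branch vertices; it takes $G$ of \emph{maximum} Wiener index in the class ${\cal C}$ and exploits the monotonicity $W(G')>W(G)$ for proper spanning $2$-connected subgraphs $G'$ of $G$: hence no proper subgraph of $G$ lies in ${\cal C}$, which collapses the Hamiltonian case to ``cycle plus at most two chords'' (giving ${\cal H}_n$, ${\cal I}_n$, or $H_{n,1,3}$ plus one short chord), and is invoked again several times in the non-Hamiltonian case. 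Without this extremality/edge-minimality idea you really must treat every $2$-connected graph with $t\ge 3$ directly, and your claim that these are ``handled by bad-vertex counting'' cannot work as stated: Lemma~\ref{lem:kimpliesWbis} is a one-way sufficient criterion with considerable slack (each bad vertex is credited only through its second bad coordinate $k'$), while the graphs $G_6^1,G_6^2,G_6^3,G_8^1$ satisfy $W(G)=W(H_{n,2,2})$ \emph{exactly}, and for $n=10$ the margin is as small as $1$ (note $W(H_{10,1,3})=113>112=W(H_{10,2,2})$, so the supergraph argument only gives $W(G)\le 112$ for supergraphs of $H_{10,1,3}$, not the needed strict inequality). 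This is precisely where the paper abandons counting: it uses a deletion-based comparison with $H_{n,1,3}$ for the family ${\cal I}_n$ (Proposition~\ref{prop:I}) and an explicit computer check for the Hamiltonian graphs with a distance-$3$ chord when $n\in\{6,8,10\}$, which is how the exceptional graphs are actually found. Your plan has no mechanism replacing either step, since failure of the hypothesis of Lemma~\ref{lem:kimpliesWbis} says nothing about the sign of $W(G)-W(H_{n,2,2})$.

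Two smaller inaccuracies. First, $t=1$ is not ``excluded by parity'': a single branch vertex of degree $4$ passes the parity test and is instead excluded because it would be a cut-vertex, contradicting $2$-connectivity. Second, the non-Hamiltonian graphs require more than your $t=2$ discussion of theta graphs: the paper's hardest work there is the layer analysis around a vertex $v$ maximizing $k(v)$ (exactly two vertices in each distance class below $k(v)$, the induced path $P$ or $Q$, the auxiliary graph $G'$ forced to equal $G$ by extremality, and a three-way case analysis on degrees), which is what actually produces a subsequence $\kappa$ dominated by $k(H_{n,2,2})$ so that Lemma~\ref{lem:kimpliesW} applies; nothing in your sketch substitutes for this argument.
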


\begin{proof}
For $n\ge 5$, let ${\cal C}$ be the class of $2$-vertex connected
graphs on $n$ vertices different from $C_n$, $H_{n,1,2}$, $H_{n,2,2}$ and
$H_{n,1,3}$.
Let $G$ be a graph with the maximum Wiener index over ${\cal C}$.
We want to prove $W(G)<W(H_{n,2,2})$ except when $G \in {\cal G}$ or $G = G_8^1$.
Notice that no proper subgraph of $G$ is in ${\cal C}$, since otherwise this
proper subgraph would have a bigger Wiener index than $G$.
First suppose that $G$ has a Hamiltonian cycle $C$.
We distinguish three cases.

{\bf Case 1:}
{\it $G$ contains an edge $xy$ where $x$ and $y$ are at distance at least 4
along $C$.}
Then $C+xy$ itself is a graph from ${\cal C}$.
Thus, by the choice of $G$ we have $G=C+xy$ and $G=H_{n,1,q}$ for some
$q\ge 4$.
By Proposition~\ref{prop:H}, $W(G)<W(H_{n,2,2})$.

{\bf Case 2:}
{\it $G$ contains an edge $xy$ where $x$ and $y$ are at distance 3 along
$C$.}
Since the Hamiltonian cycle $C$ with $xy$ is $H_{n,1,3}$, $G$ must contain
one more edge, say $st$.
Since $H_{n,1,3}$ with $st$ is in $\cal C$, there are no other edges in $G$.
By Case~1 we may assume that $s$ and $t$ are at distance 2 or 3 along $C$.
Since $G$ is a supergraph of $H_{n,1,3}$, we have $W(G)<W(H_{n,2,2})$ if
$n\notin\{6,8,10\}$, by Proposition~{\ref{prop:H}}.
The cases when $n\in\{6,8,10\}$ were checked by a computer and it was found
that $W(G)<W(H_{n,2,2})$ with two exceptions if $n=6$, namely when
$G=G_6^2$ and $G=G_6^3$, and with one exception if $n=8$, namely when
$G=G_8^1$.

{\bf Case 3:}
{\it The edges of $G$ not belonging to $C$ link vertices at distance 2
along $C$.}
Let us denote by $e_1,\dots ,e_\ell$ these edges.
Since $C+e_1+e_2$ itself is a graph from $\cal C$, we have $G=C+e_1+e_2$.
Now Proposition~\ref{prop:I} concludes the proof.

So assume that $G$ has no Hamiltonian cycle.
Let $v$ be a vertex of $G$ with the maximum value of $k$, that is
$k(v)=k_n(G)$.
We denote this value by $p$.
If $p=1$, it is clear that $k(G)\prec k(H_{n,2,2})$ and
Lemma~\ref{lem:kimpliesW} implies the result.
So assume that $p\ge 2$.
We know that there are exactly two vertices at distance $i$ from $v$
for every $i\in \{1,\ldots,p-1\}$.
Denote these vertices by $u_i$ and $v_i$.
Notice that for $i=1,\dots,p-2$ the only neighbours of $u_i$ and $v_i$
are contained in $\{u_{i-1},v_{i-1},u_i,v_i,u_{i+1},v_{i+1}\}$
(with $u_0=v_0=v$).
Moreover, since $G$ is 2-vertex connected, there exists a matching
of size 2 between $\{u_i,v_i\}$ and $\{u_{i+1},v_{i+1}\}$ for
$i=1,\dots,p-2$.
So we assume that $u_iu_{i+1}$ and $v_iv_{i+1}$ are edges of $G$ for
$i=1,\dots,p-2$ and thus that $P=u_{p-1},\dots,u_1,v,v_1,\dots ,v_{p-1}$
is a path of $G$.
Finally, denote by $X$ the set $(N_G(u_{p-1})\cup N_G(v_{p-1}))\setminus
\{u_{p-2},v_{p-2},u_{p-1},v_{p-1}\}$.
Let $G'$ be the subgraph of $G$ obtained by removing the edges
of $G[V(P)]$ which do not belong to $P$.
Notice first that $G'$ is a 2-vertex connected graph.
Indeed, since $G'\setminus \{u_{p-2},u_{p-3},\dots,v_{p-3},v_{p-2}\}$
is connected (otherwise $u_{p-1}$ or $v_{p-1}$ would be a cut-vertex of
$G$), no vertex of $P$ is a cut-vertex of $G'$.
Moreover, no vertex of $G'\setminus P$ is a cut-vertex of $G'$ otherwise
it would be a cut-vertex of $G$.
Furthermore, $G'$ is not a cycle or $H_{n,1,2}$ or $H_{n,1,3}$, otherwise
$G$ would have a Hamiltonian cycle.
We may also assume that $G'$ is not $H_{n,2,2}$, since otherwise $G$ is a
supergraph of $H_{n,2,2}$ and $W(G)<W(H_{n,2,2})$.
Hence, $G'$ belongs to $\cal C$, and by the choice of $G$ we have $G'=G$.
We consider two cases.

{\bf Case 1:}
{\it $p=\lfloor \frac{n}{2} \rfloor$.}
In this case $|X|=1$ or $|X|=2$.
If $|X|=1$, denote by $x$ the unique vertex of $X$.
Since $G$ is 2-vertex connected, $u_{p-1}x$ and $v_{p-1}x$ are edges
of $G$ and $G$ has a Hamiltonian cycle, contradicting a previous assumption.
If $|X|=2$, denote by $u_p$ and $v_p$ the vertices of $X$.
Analogously as above, since $G$ is 2-vertex connected, we can assume that
$u_{p-1}u_p$ and $v_{p-1}v_p$ are edges of $G$.
Since $G$ has no Hamiltonian cycle, $u_pv_p$ is not an edge of $G$.
But $G$ is 2-vertex connected, and so $u_{p-1}v_p$ and $v_{p-1}u_p$
are edges of $G$.
Hence $G=H_{n,2,2}$.

{\bf Case 2:}
{\it $p<\lfloor \frac{n}{2} \rfloor$.}
Below we will show that $k(G)$ admits a non-decreasing subsequence
${\kappa}=(k_1,\dots,k_{2p+1})$ with
${\kappa}\prec(1,1,2,2,2,2,3,3,4,4,\dots,p-1,p-1,p)$, and the existence of
a coordonate $i_0$ for which $k_{i_0}(G)<k_{i_0}(H_{n,2,2})$.
Then we will conclude that $W(G)<W(H_{n,2,2})$.
Indeed, for every value $k_i(G)>2$ with $i<n$ there will exist at least
$2k_i(G)$ elements before it in $k(G)$, which means that $i\ge 2k_i(G)+1$,
and hence $k_i(G)\le \lfloor\frac{i-1}{2}\rfloor=k_i(H_{n,2,2})$.
If $k_i(G)=1$, then we have $k_i(G)\le k_i(H_{n,2,2})$.
Further, $G$ has at least two vertices of degree at least 3, for otherwise
$G$ would not be 2-vertex connected.
Hence, if $k_i(G)=2$, then we have $i\ge 3$ and $k_i(G)\le k_i(H_{n,2,2})$.
Moreover, as $k_{i_0}(G)<k_{i_0}(H_{n,2,2})$ we have
$k(G)\prec k(H_{n,2,2})$.
By Lemma~\ref{lem:kimpliesW}, we conclude that $W(G)<W(H_{n,2,2})$.

Thus, all that remains to show is the existence of subsequence $\kappa$ of $k(G)$ with
${\kappa}\prec(1,1,2,2,2,2,3,3,$ $4,4,\dots,p-1,p-1,p)$, and the existence of
a coordonate $i_0$ for which $k_{i_0}(G)<k_{i_0}(H_{n,2,2})$.
Since $p<\lfloor\frac{n}{2}\rfloor$ we have $|X|\ge 3$, and we may assume
that $u_{p-1}$ has at least two neighbours in $X$.
We find a special path $Q$ in $G$.
There are two cases to consider.
First, if $v_{p-1}$ has at least two neighbours in $X$,
then $k(u_i)=k(v_i)=p-i$ for every $i=1,\dots,p-1$.
In this case we set $x=u_{p-1}$, $y=v_{p-1}$ and $Q=P$.
So $Q$ is an induced path, the only neighbours of $Q$ in $G\setminus Q$
are those of $x$ and $y$ and $\kappa_Q=(1,1,2,2,\dots ,p-1,p-1,p)$ is
a subsequence of $k(G)$ achieved only by vertices of $Q$.
If $v_{p-1}$ has only one neighbour in $X$, then $v_{p-1}$ has degree 2,
and we denote by $v_p$ its neighbour different from $v_{p-2}$.
The degree of $v_p$ is at least 3, for otherwise we would have
$k(v_1)\ge p+1$, which contradicts the fact that $v$ has the maximum value
in $k(G)$.
So we have $k(v_1)=p$ and $k(u_i)=k(v_{i+1})=p-i$ for $i=1,\dots,p-1$.
In this case we set $x=u_{p-1}$, $y=v_{p}$ and $Q=P\cup \{v_p\}$.
If $Q$ is not an induced path in $G$, then $u_{p-1}v_p=xy$ is an
edge of $G$.
But since $v_p$ is not a cut-vertex of $G$, $G-xy$ is
2-vertex connected.
Since $G$ has no Hamiltonian cycle, $G-xy$ is different from $C_n$,
$H_{n,1,2}$ and $H_{n,1,3}$.
And analogously as before Case~1 we may assume that $G-xy$ is different from
$H_{n,2,2}$.
So $G-xy\in {\cal C}$ which is not possible.
Thus here again, $Q$ is an induced path in $G$, the only neighbours
of $Q$ in $G\setminus Q$ are those of $x$ and $y$ and
$\kappa_Q=(1,1,2,2,\dots ,p-1,p-1,p,p)$ is a subsequence of $k(G)$
achieved only by vertices of $Q$.
To conclude the proof, we analyse three different cases.

First assume that $G$ contains a vertex $z$ with degree at least 4.
If $z\in \{x,y\}$ then $z$ has at least three neighbours in $G\setminus Q$.
On the other hand, if $z\notin \{x,y\}$ then $z$ and at least two its
neighbours are in $G\setminus Q$.
In any case, at least three vertices of $\{z\}\cup N(z)$ are in
$G\setminus Q$.
We show that all these vertices have $k$ at most $2$.
Obviously $k(z)=1$.
So let $z_1$ be a neighbour of $z$ outside $Q$.
Suppose that $k(z_1)>2$.
If the degree of $z_1$ is at least 3, then $k(z_1)=1$, a contradiction.
Therefore $z_1$ has exactly two neighbours.
Observe that $d(z_1,s)\le 2$ whenever $s$ is $z$ or one of its neighbours.
Since $\{z\}\cup N(z)$ has at least five vertices, the other neighbour of
$z_1$ (different from $z$) must be in $N(z)$.
Denote this neighbour by $z_2$.
Since $z$ is not a cut-vertex, $z_2$ has a neighbour, say $q$, which is
different from $z$ and $z_1$.
If $q\in N(z)$ then $z,z_1,z_2,q$ form a 4-cycle with a chord in $G$,
a contradiction.
On the other hand if $q\notin N(z)$ then $k(z_1)=2$ which contradicts our
assumption that $k(z_1)>2$.
Hence $k(z_1)\le 2$, and the same holds for all neighbours of $z$ outside
$Q$.
It means that the three vertices of $\{z\}\cup N(z)$ outside $Q$ together
with $\kappa_Q$ yield a sequence, first $2p+1$ members of which form
the desired sequence $\kappa$.
Moreover, we know that $k_7(G)$ is at most the seventh value of $\kappa$,
and so $k_7(G)\le 2 < k_7(H_{n,2,2})=3$.
Consequently, $W(G)<W(H_{n,2,2})$.
Therefore, in the next we assume that every vertex of $G$ has degree at most
3.

Now suppose that $G$ contains at least four vertices $z_1$, $z_2$, $z_3$
and $z_4$ of degree 3.
Two of these vertices at least, say $z_1$ and $z_2$, do not belong to $Q$,
and so $\{k(z_1),k(z_2)\}\cup \kappa_Q$ contains the desired sequence
$\kappa=(1,1,1,1,2,2,3,3,\dots , p-1,p-1,p)$.
Since $k_3(G)=1 < k_3(H_{n,2,2})=2$, we have $W(G)<W(H_{n,2,2})$.

Finally, suppose that $G$ has exactly two vertices of degree 3,
while all the other vertices have degree 2.
So $G$ is $H_{n,a,b}$, and $x,y$ are connected by paths of length $a$, $b$
and $n-a-b+1$, where the last one is $Q$.
Since $G$ has no Hamiltonian cycle, $a\ge 2$.
And since $G$ is different from $H_{n,2,2}$, we have $b\ge 3$.
Then $G\setminus Q$ has at least three vertices, say $z_1$, $z_2$ and $z_3$,
which are adjacent to $x$ or $y$.
Since $k_i(z_i)=2$, where $1\le i\le 3$, $\kappa_Q\cup\{k(z_1),k(z_2),k(z_3)\}$
contains the desired sequence $\kappa$.
Since $k_7(G) = 2 < k_7(H_{n,2,2})=3$, we conclude that $W(G)<W(H_{n,2,2})$.
\end{proof}

For the sake of completeness, in Table~2 we present Wiener indices of
$C_n$, $H_{n,1,2}$, $H_{n,2,2}$ and $H_{n,1,3}$.
These indices are calculated using the fact that all the considered graphs
have one large isometric cycle of length $t\le n$ plus some extra vertices.
Notice that ${\bf 2}_t=\frac{t^2-1}4$ if $t$ is odd and
${\bf 2}_t=\frac{t^2}4$ if $t$ is even.

\begin{center}
\begin{tabular}{c|c|c|}
& odd $n$ & even $n$ \\
\hline
$W(C_n)$ & $\frac 18(n^3-n)$ & $\frac 18(n^3)$ \\
$W(H_{n,1,2})$ & $\frac 18(n^3-n^2+3n-3)$ & $\frac 18(n^3-n^2+2n)$ \\
$W(H_{n,2,2})$ & $\frac 18(n^3-n^2-n+17)$ & $\frac 18(n^3-n^2-2n+16)$ \\
$W(H_{n,1,3})$ & $\frac 18(n^3-2n^2+11n-18)$ & $\frac 18(n^3-2n^2+12n-16)$ \\
\hline
\end{tabular}
\end{center}
\begin{center}
Table~2: Wiener indices of $C_n$, $H_{n,1,2}$, $H_{n,2,2}$ and $H_{n,1,3}$.
\end{center}

Let $H_{n,2,2}^+$ be the graph obtained from $H_{n,2,2}$ by adding an edge between two vertices that are at distance 1 from the vertices of degree 3.
As a remark, we note that $H_{n,2,2}^+$ has Wiener index exactly $W(H_{n,2,2}) - 1$, so it is the (possibly not unique) fourth $2$-connected graph by decreasing Wiener index for $n = 9$ and $n \ge 11$. We conjecture that for $n$ large enough, it is the unique fourth $2$-connected graph by decreasing Wiener index, and that the unique fifth such graph is $H_{n,1,3}$.

\vskip 1pc
\noindent{\bf Acknowledgements.}~~The third author acknowledges
partial support by Slovak research grants APVV-15-0220, APVV-17-0428,
VEGA 1/0142/17 and VEGA 1/0238/19.
The research was partially supported by Slovenian research agency ARRS,
program no. P1-0383.


\begin{thebibliography}{xxxxx}

\bibitem{BM08}
J.~A.~Bondy and U.S.R. Murty,
Graph Theory,
2nd edition, Springer Verlag London 2008. 

\bibitem{Deng}
H.-K. Deng,
The trees on $n\ge 9$ vertices with the first to seventeenth greatest Wiener
indices are chemical graphs,
\emph{MATCH Commun. Math. Comput. Chem.} \textbf{57} (2007), 393--402.

\bibitem{surv1}
A.~A.~Dobrynin, R.~Entringer, I.~Gutman,
Wiener index of trees: Theory and applications,
\emph{Acta Appl. Math.} \textbf{66}(3) (2001), 211--249.

\bibitem{DG}
H. Dong, X. Guo,
Ordering trees by their Wiener indices,
\emph{MATCH Commun. Math. Comput. Chem.} \textbf{56} (2006), 527--540.

\bibitem{EJS}
R. C. Entringer, D. E. Jackson, D. A. Snyder,
Distance in graphs,
\emph{Czechoslovak Math. J.} \textbf{26} (1976), 283--296.

\bibitem{FIY}
L. Feng, A. Ilic, G. Yu,
On the second maximal and minimal Wiener index of unicyclic graphs with
given girth,
\emph{Ars Combinatoria} \textbf{104} (2012), 13--22.

\bibitem{GCR14}
I. Gutman, R. Cruz, J. Rada,
Wiener index of Eulerian graphs,
\emph {Discrete Applied Math.}, {\bf 162} (2014), 247--250.

\bibitem{GZ}
I.~Gutman, S.~Zhang,
Graph connectivity and Wiener index,
\emph{Bull. Cl. Sci. Math. Nat. Sci. Math.} \textbf{31}  (2006), 1--5.

\bibitem{Harary}
F. Harary,  Status and contrastatus,
\emph{Sociometry} \textbf{22} (1959), 23--43.
 
\bibitem{KS_chapter}
M. Knor, R.~{\v S}krekovski, 
Wiener index of line graphs,
in M. Dehmer and F. Emmert-Streib (Eds.),
\emph{Quantitative Graph Theory: Mathematical Foundations and Applications)}, CRC Press (2014), 279--301.

\bibitem{KST2}
M.~Knor, R.~{\v S}krekovski, A.~Tepeh,
Digraphs with large maximum Wiener index,
\emph{Appl. Math. Comput.} \textbf{284} (2016) 260--267.

\bibitem{ma5}
M.~Knor, R.~\v Skrekovski, A.~Tepeh,
 Mathematical aspects of Wiener index, 
\emph{Ars Math. Contemp.} \textbf{11} (2016) 327--352.

\bibitem{LLL} M. Liu, B. Liu, Q. Li, Erratum to `The trees on $n\ge 9$
  vertices with the first to seventeenth greatest Wiener indices are
  chemical graphs', \emph{MATCH Commun. Math. Comput. Chem.}
  \textbf{64} (2010), 743--756.

\bibitem{ma2}
R.~\v{S}krekovski, I.~Gutman,
Vertex version of the Wiener theorem,
\emph{MATCH Commun. Math. Comput. Chem.} \textbf{72} (2014), 295--300.

\bibitem{Soltes}
L. {\v S}olt\'{e}s,
Transmission in graphs: A bound and vertex removing,
\emph{Math. Slovaca} \textbf{41} (1991), 11--16.

\bibitem{Wiener}
H.~Wiener, Structural determination of paraffin boiling points,
\emph{J. Amer. Chem. Soc.} \textbf{69} (1947), 17--20.

\bibitem{Gut-sur}
K.~Xu, M.~Liu, K.~C.~Das, I.~Gutman, B.~Furtula,
A survey on graphs extremal with respect to distance-based topological indices,
\emph{MATCH Commun. Math. Comput. Chem.} \textbf{71} (2014), 461--508.

\end{thebibliography}
\end{document}